\definecolor{myurlcolor}{rgb}{0,0,0.7}
\newcommand{\iden}{\mathbb{I}}
\newtheorem{lemma}{Lemma}
\newtheorem{conjecture}{Conjecture}
\begin{document}


\title{Exploring the gap between thermal operations and enhanced thermal operations}

\author{Yuqiang Ding}
\author{Feng Ding}
\author{Xueyuan Hu}%
 \email{xyhu@sdu.edu.cn}
\affiliation{
 School of Information Science and Engineering, Shandong University, Qingdao 266237, China
}%




\date{\today}

\begin{abstract}
The gap between thermal operations (TO) and enhanced thermal operations (EnTO) is an open problem raised in [Phys. Rev. Lett. 115, 210403 (2015)]. It originates from the limitations on coherence evolutions. Here we solve this problem by analytically proving that, a state transition induced by EnTO cannot be approximately realized by TO. It confirms that TO and EnTO lead to different laws of state conversions. Our results can also contribute to the study of the restrictions on coherence dynamics in quantum thermodynamics.
\end{abstract}

\maketitle


\section{\label{sec:level1}Introduction}
In the resource theory of quantum thermodynamics \cite{Lostaglio_2019}, the main problem is to figure out the allowed state conversions under a set of quantum operations known as thermal operations (TO). A thermal operation can be constructed as follows \cite{Janzing2000a,Horodecki2013,PhysRevLett.111.250404}. A quantum system, previously isolated and characterized by a Hamiltonian $H_S$, is brought into contact with a heat bath described by a Hamiltonian $H_R$ at a fixed inverse temperature $\beta$. Then the system is decoupled from the bath after some time. This interaction conserves energy overall through the whole process, according to the first law of thermodynamics.
 
Although the definition of TO has clear operational meaning, it is difficult to be dealt with mathematically, because the number of variables for describing the interaction is infinitely large. Alternatively, two properties of TO are observed: (1) the time-translation symmetry, which corresponds to the first law, and (2) the Gibbs-preserving condition, which corresponds to the second law. The set of quantum operations which satisfy both of these properties are called the enhanced thermal operations (EnTO) \cite{PhysRevLett.115.210403,Gour2018}. When only population dynamics is concerned, it has been proven that state conversions induced by TO are equivalent to those induced by EnTO \cite{Horodecki2013}. This elegant result leads to the necessary and sufficient conditions on population dynamics under TO \cite{PhysRevLett.111.250404,Horodecki2013,brandao2015second,PhysRevX.8.041051}.

In quantum systems, coherence between energy levels cannot be created or enhanced by thermal operations, and hence are widely studied as an independent resource aside from non-equilibrium populations \cite{PhysRevLett.115.210403,PhysRevX.5.021001,Narasimhachar2015,Lostaglio2015,Korzekwa_2016,PhysRevA.96.032109,PhysRevA.99.012104}. In Refs. \cite{PhysRevLett.115.210403,PhysRevX.5.021001}, the time-transitional symmetric property is employed to derive an upper bound for the coherence preserved by TO. This bound is tight for qubit systems, but not for high-dimensional systems \cite{PhysRevLett.115.210403}. This leads to a gap between TO and EnTO, namely, there are state conversions under EnTO which cannot be realized \emph{exactly} by TO. However, it remains an open problem whether this gap can be closed approximately. Here we formally state two versions of the closure conjecture.

\begin{conjecture}\label{con1}
(Closure conjecture, v1) For any enhanced thermal operation $\mathcal{E}^{\rm EnTO}$, there exists a thermal operation $\mathcal{E}^{\rm TO}$, such that the distance between these two operations is small, i.e., $|\mathcal{E}^{\rm EnTO}-\mathcal{E}^{\rm TO}|\leq\epsilon$.
\end{conjecture}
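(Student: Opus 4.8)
The natural first move is to stress-test Conjecture~\ref{con1} on the smallest system for which the TO/EnTO coherence gap is already known to appear; given that populations behave identically under TO and EnTO, that the coherence bound of \cite{PhysRevLett.115.210403} is tight only for qubits, and that the gap is genuinely present for qutrits, I strongly expect the conjecture to be \emph{false}, and I would try to settle it by exhibiting an explicit EnTO transition that no TO can even approximate. The key reduction is to localize the obstruction inside a single coherence mode of a carefully chosen small system. Take a three-level system with equally spaced, non-degenerate levels $E_0<E_1<E_2$, $E_1-E_0=E_2-E_1=\Delta$, at inverse temperature $\beta$; its operators split into modes $\omega\in\{0,\pm\Delta,\pm2\Delta\}$, and time-translation covariance forces every admissible channel (TO or EnTO) to act block-diagonally across these modes. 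Within the two-dimensional mode $\omega=\Delta$ the action is a $2\times2$ complex matrix $M$ on the coherence pair $(\rho_{10},\rho_{21})$, within mode $2\Delta$ a scalar $m$ on $\rho_{20}$, and within mode $0$ a Gibbs-stochastic matrix $T$ on populations, with complete positivity coupling $(T,M,m)$. For EnTO these are the \emph{only} constraints; for TO the energy-conserving-unitary/thermal-bath dilation imposes strictly more, which is precisely the content of the coherence limitations of \cite{PhysRevLett.115.210403}.

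Second, I would choose an input $\rho$ whose mode-$\Delta$ vector is supported only on the $(0,1)$ coherence, $(\rho_{10},\rho_{21})=(c,0)$ with suitable populations, and a target $\sigma$ whose mode-$\Delta$ vector is $(M_{11}c,\,M_{21}c)$ for a choice of $M$ that reshuffles a macroscopic amount of coherence onto the $(1,2)$ pair---the kind of within-mode redistribution that TO resists. To certify that $\rho\to\sigma$ is EnTO-realizable one writes the channel down explicitly: a convenient Gibbs-stochastic $T$, the chosen $M$, a compatible $m$, and a check of complete positivity of the resulting covariant map (for a qutrit a finite semidefinite condition), or equivalently a dilation $\mathcal{E}^{\rm EnTO}(\rho)=\mathrm{Tr}_A[V(\rho\otimes\tau_A)V^\dagger]$ with $V$ covariant under $H_S+H_A$ and $\tau_A$ tuned to enforce Gibbs-preservation. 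To certify that \emph{no} TO approximates $\mathcal{E}^{\rm EnTO}$, I would extract from the TO constraints of \cite{PhysRevLett.115.210403} a continuous (Lipschitz) functional $f$ of the output state with $f(\Phi(\rho))\le b$ for every thermal operation $\Phi$ while $f(\sigma)\ge b+\delta$ for the target. Then for any $\mathcal{E}^{\rm TO}$ one gets $\|\sigma-\mathcal{E}^{\rm TO}(\rho)\|\ge\delta/L$, hence $|\mathcal{E}^{\rm EnTO}-\mathcal{E}^{\rm TO}|$ is bounded below by a fixed constant, contradicting Conjecture~\ref{con1}.

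The delicate point, and the reason the problem has stayed open, is the quantitative robustness of the last step. The constraints of \cite{PhysRevLett.115.210403} are \emph{necessary} conditions on TO, so ruling out merely \emph{exact} realizability is not enough: one must show that the chosen $\sigma$ is a fixed distance from the \emph{closure} of the TO-reachable set from $\rho$, and must control that closure carefully---limits of thermal operations with growing baths, the role of catalysts and ancillas, and whether allowing a small error in the \emph{input} could be exploited to evade the bound. Concretely I expect the real work to be (i) sharpening the known coherence-mode inequalities into a form that is closed and yields a strictly positive gap $\delta$, rather than just $f(\sigma)>b$ with $\delta$ possibly shrinking to zero along some sequence of TO, and (ii) verifying that the separating functional $f$ is stable under enlarging the system and adding ancillas, so that no such manipulation beats the bound. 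Once a robust separating functional is in hand, the construction above converts it into the desired impossibility statement and refutes the conjecture.
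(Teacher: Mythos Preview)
Your high-level plan---disprove the conjecture by exhibiting a qutrit EnTO transition whose target sits a fixed distance from the TO-reachable set---is exactly the paper's plan. But the execution diverges in two ways, and the second is a genuine gap.

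First, the mechanism differs. You aim to exploit \emph{redistribution} of mode-$\Delta$ coherence from $\rho_{10}$ onto $\rho_{21}$. The paper instead exploits a \emph{preservation} gap on $\rho_{10}$ alone: from $\ket{\psi_0}=\tfrac{1}{\sqrt2}(\ket0+\ket1)$ the target has $\rho_{21}=0$ and $|\rho_{10}|=\tfrac12\sqrt{1-q^2}$. The crucial device is to place the output populations at an \emph{extreme point} of the thermo-majorization polytope (point~(b) in Fig.~\ref{fig1}), where the Gibbs-stochastic matrix $G^{(b)}$ is uniquely fixed. This eliminates all population-sector freedom and isolates the coherence question, which is what makes the subsequent bound tractable.

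Second---and this is where your proposal stops short---you correctly flag robustness as ``the real work'' but do not do it. Your separating functional $f$ is hypothetical, and appealing to ``the TO constraints of \cite{PhysRevLett.115.210403}'' cannot suffice as stated: the coherence bound proved there (Eq.~(\ref{7}) here) follows from time-translation covariance alone, so it holds for EnTO too and cannot separate the two classes. The paper's substance is precisely filling this gap. Lemma~\ref{lemma1} shows that for equally spaced $H_S$ every TO is a convex mixture over baths with harmonic spectra, reducing to a single bath type. Writing the energy-conserving unitary block-diagonally as $U=\bigoplus_k U^{(k)}$ and SVD-diagonalizing its main sub-blocks (Lemma~\ref{le:optimalU}), one computes $|\rho_{10}^{\rm TO}|=\tfrac12(1-q^2)$ exactly at point~(b), versus $|\rho_{10}^{\rm EnTO}|=\tfrac12\sqrt{1-q^2}$. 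Robustness under $\epsilon$-perturbation of the populations then comes from a direct singular-value count: unitarity of each $U^{(k)}$ forces at least $d_k-d_{k-1}-d_{k-2}$ singular values of $M_{00}^k$ to equal~$1$ (which is positive when $q<\tfrac{\sqrt5-1}{2}$), and this lower-bounds the defect $(\vec\mu,\vec\mu)$ in Eq.~(\ref{eq:gap_appro1}) to give the explicit, $\epsilon$-stable gap of Eq.~(\ref{eq:gap_epsilon}). Without the extreme-point rigidity and this unitary-block analysis, your step~(i) is the entire unsolved problem, not a routine sharpening.
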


\begin{conjecture}\label{con2}
(Closure conjecture, v2) For any given input state $\rho_0$, if the state conversion $\rho_0\rightarrow\rho$ is realizable by EnTO, then there always exist a state $\rho'$ such that $|\rho'-\rho|\leq\epsilon$ and the conversion $\rho_0\rightarrow\rho'$ is achievable by TO.
\end{conjecture}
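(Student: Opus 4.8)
Since the abstract makes clear that the paper establishes the \emph{negation} of these conjectures, I describe how I would disprove Conjecture~\ref{con2} (which \emph{a fortiori} disproves Conjecture~\ref{con1}): I would produce a single input $\rho_0$ and a single enhanced thermal operation $\mathcal{E}^{\rm EnTO}$ whose output $\rho=\mathcal{E}^{\rm EnTO}(\rho_0)$ lies at a fixed distance $\epsilon_0>0$ from every state reachable from $\rho_0$ by a thermal operation. Everything hinges on isolating a necessary condition for TO that is (i) strictly stronger than ``covariant and Gibbs-preserving'' (so that EnTO need not satisfy it) and (ii) stable under the limits hidden in an approximate realization --- arbitrary bath dimension, arbitrary $H_R$, and sequences of TO whose outputs on $\rho_0$ converge to $\rho$.

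\emph{Kinematics.} Work in $\mathbb{C}^3$ and choose $\rho_0$ supported on the populations and on the single coherence $\rho_{20}$ between the extreme levels (so $\rho_{10}=\rho_{21}=0$; such positive states exist). Both TO and EnTO are time-translation covariant, hence act within each coherence mode; since the only nonzero off-diagonal mode of $\rho_0$ is the one spanned by $\rho_{20}$, every reachable state keeps the same support structure, and the channel reduces to a Gibbs-stochastic action $\vec p\mapsto\vec p\,'$ on populations together with a single contraction $\rho_{20}\mapsto c\,\rho_{20}$, $|c|\le1$. For EnTO this exhausts the constraints (up to complete positivity of the full map), which is the definition recalled in \cite{PhysRevLett.115.210403}. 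For TO there is more. Writing $\mathcal{E}^{\rm TO}(\cdot)=\mathrm{Tr}_R[U(\cdot\otimes\gamma_R)U^\dagger]$ with $[U,H_S+H_R]=0$, I would extract from the dilation an inequality of the schematic form $|c|\le g(\vec p,\vec p\,')$, bounding the coherence damping in terms of the initial and final populations alone --- a Cauchy--Schwarz / fluctuation--dissipation type bound, \emph{sharper} than the covariance-only bound of \cite{PhysRevLett.115.210403,PhysRevX.5.021001}, obtained by using that the off-diagonal block of $U(\rho_0\otimes\gamma_R)U^\dagger$ responsible for $\rho_{20}$ is built from the \emph{same} $U$ as the diagonal blocks that fix $\vec p\,'$. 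The essential feature is that $g$ depends only on finitely many parameters varying continuously, so the region the inequality carves out survives passage to the limit of a convergent sequence of thermal operations with growing baths, i.e.\ it is closed.

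\emph{The witness.} Knowing that for $\mathbb{C}^3$ --- unlike a qubit, where the bound is tight --- covariance permits strictly more coherence than $g(\vec p,\vec p\,')$ for suitable transitions, I would pick $\mathcal{E}^{\rm EnTO}$ realizing a nontrivial population transition $\vec p\to\vec p\,'$ together with $|c|>g(\vec q,\vec q\,')$ for \emph{every} pair $(\vec q,\vec q\,')$ in a fixed neighborhood of $(\vec p,\vec p\,')$. That $\mathcal{E}^{\rm EnTO}$ is a legitimate EnTO is a finite check: exhibit Kraus operators carrying definite modes and verify Gibbs-preservation and complete positivity. Conversely, if $\rho'=\mathcal{E}^{\rm TO}(\rho_0)$ satisfied $\|\rho'-\rho\|\le\epsilon$, then its populations would be within $O(\epsilon)$ of $\vec p\,'$, so its transition $(\vec p,\tilde{\vec p}\,')$ would lie in that neighborhood and the TO inequality would force $|\rho'_{20}|\le g(\vec p,\tilde{\vec p}\,')\,|(\rho_0)_{20}|<|\rho_{20}|-\delta$ for a fixed $\delta>0$; choosing $\epsilon<\delta$ (up to a harmless constant) contradicts $\|\rho'-\rho\|\le\epsilon$. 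The threshold thus obtained is the announced $\epsilon_0$.

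\emph{Main obstacle.} The genuinely hard step is the second one: guessing the correct TO-only invariant and proving it is a necessary condition for \emph{all} thermal operations, not just for a convenient subclass, while keeping it closed under the relevant limits. The covariance bound is useless here because EnTO saturates it, so the new inequality must actually exploit the common-unitary, Gibbs-bath structure of the dilation; and one must tame the partial trace over an a priori unbounded bath --- presumably by a truncation or compactness argument showing that the constraint is already fixed by the finite-dimensional system block. Once such an inequality is in place and shown to define a closed set strictly inside the EnTO-feasible region, the counterexample and the explicit $\epsilon_0$ drop out of the elementary comparison above.
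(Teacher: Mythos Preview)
Your high-level strategy is the same as the paper's: exhibit a qutrit input, an EnTO output with maximal coherence for its population transition, and a TO-only upper bound on that coherence which (i) is strictly below the EnTO value and (ii) is stable under small perturbations of the output populations. Where you diverge is in the choices and in what you leave unproven. The paper takes $\rho_0=\ket{\psi_0}\bra{\psi_0}$ with $\ket{\psi_0}=\tfrac{1}{\sqrt2}(\ket0+\ket1)$ and works with the coherence $\rho_{10}$ in the \emph{degenerate} Bohr-frequency mode, not with $\rho_{20}$. This choice is not incidental: the target population at point~(b), $(p_0,p_1)=\tfrac12(1-q^2,1)$, lies on an edge of the thermal polytope where the Gibbs-stochastic matrix $G$ is \emph{uniquely} determined (Eq.~(\ref{9})). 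That uniqueness is what collapses the optimization over all TO realizing the given populations into an explicit computation, and is the structural feature your ``$g(\vec p,\vec p\,')$'' proposal does not identify.

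The step you flag as the main obstacle---taming the arbitrary bath and extracting a closed TO-only bound---is handled in the paper not by compactness but by two concrete reductions you would still need. Lemma~\ref{lemma1} shows that for a system with ladder spectrum every TO decomposes as a convex mixture of TOs whose baths also have ladder spectrum $H_R=\sum_n n\hbar\omega\,\Pi_n$; this eliminates the freedom in $H_R$ entirely. Lemma~\ref{le:optimalU} then shows, via SVD on the diagonal blocks $u_{jj}^k$ of the energy-conserving unitary, that the optimal $|\rho_{10}|$ is attained when these blocks are diagonal with ordered singular values. With both reductions in place, unitarity of each $U^{(k)}$ forces a definite number of singular values of $M_{00}^k$ to equal~$1$ (Eq.~(\ref{eq:sigmas}) and its perturbed analogue), and the gap $\tfrac12\sqrt{1-q^2}-\tfrac12(1-q^2)$ drops out of a direct sum. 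The robustness under $\epsilon$-perturbations (your stability requirement) is then obtained by a quantitative version of the same counting argument, valid when $0\ll q\ll\tfrac{\sqrt5-1}{2}$, yielding the explicit lower bound Eq.~(\ref{eq:gap_epsilon}). Your outline is sound, but these two lemmas are exactly the missing content; without them the ``schematic inequality $|c|\le g(\vec p,\vec p\,')$'' remains a hope rather than a constraint, and the choice of $\rho_{20}$ over $\rho_{10}$ would require reworking the block analysis from scratch.
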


Apparently, Conjecture \ref{con1} implies Conjecture \ref{con2}. Thus disproof of Conjecture \ref{con2} is sufficient for disproving Conjecture \ref{con1}. In this paper, we will disprove Conjecture \ref{con2} with an analytical counterexample, and hence confirms that TO and EnTO lead to different laws of state conversions. Precisely, we consider a qutrit system with Hamiltonian $H_S=\sum_{m=0}^2m\hbar\omega\ket{m}\bra{m}$ and in initial state $\ket{\psi_0}=\frac{1}{\sqrt{2}}(\ket{0}+\ket{1})$, and a heat bath at a fixed inverse temperature $\beta$. We will show that the state conversion from $\rho_0=\ket{\psi_0}\bra{\psi_0}$ to 
\begin{equation}
    \rho=\frac12\left(\begin{array}{ccc}
      1-e^{-2\beta\hbar\omega}   & \sqrt{1-e^{-2\beta\hbar\omega}} & 0 \\
      \sqrt{1-e^{-2\beta\hbar\omega}}   & 1 & 0\\
      0 & 0 & e^{-2\beta\hbar\omega}
    \end{array}
    \right)
\end{equation}
is realizable by EnTO. Then we will prove analytically that, if the temperature is proper such that $0\ll e^{-\beta\hbar\omega}\ll \frac{\sqrt{5}-1}{2}$, then any state $\rho'$ satisfying $|\rho'-\rho|\leq\epsilon$ is not achievable by TO.

\section{Thermal operations and related concepts}
 
	Here we brief review some related concepts and results.
	By definition, a thermal operation is expressed as \cite{Horodecki2013}
	\begin{equation}
	 \mathcal{E}^{\rm TO}(\rho_{S})= {\rm Tr}_{R}[{U}(\rho_{S}\otimes\gamma_{R}){U}^{\dagger}],
	\label{1}
	\end{equation}
	where $\rho_S$ is a quantum state of the system with Hamiltonian $H_S$, $\gamma_{R}=e^{-\beta H_R}/{\rm Tr}(e^{-\beta H_R})$ is the Gibbs state of the heat bath with Hamiltonian $H_R$ at a fixed inverse temperature $\beta$, and $U$ is a joint unitary commuting with the total Hamiltonian of the system and heat bath, $[U,H_S+H_R]=0$.
 
	In Ref. \cite{PhysRevLett.115.210403}, the following two core properties of TO are identified:
	\newline
   	(i) Time-translation symmetric condition,
	\begin{equation}	
 \mathcal{E}^{\rm TO}(e^{-iH_St}\rho_Se^{iH_St})=e^{-iH_St} \mathcal{E}^{\rm TO}(\rho_S)e^{iH_St};
\label{2}
	\end{equation}
	(ii) Gibbs-preserving condition,
	\begin{equation}
 \mathcal{E}^{\rm TO}(\gamma_{S})=\gamma_S.
	\end{equation}
  
These properties are related to the laws of thermodynamics. Property (i) is derived from the energy conservation condition, and thus reflects the first law.
Property (ii) describes the second law, i.e, it is impossible to prepare a non-equilibrium state from an equilibrium state without consuming extra work. 
The operations satisfying both properties (i) and (ii) are called enhanced thermal operations (EnTO) \cite{PhysRevX.5.021001} or thermal processes \cite{Gour2018}.
	
Let $\boldsymbol{p_0}$ and $\boldsymbol{p}$ be the vectors of population distributions of states $\rho_0$ and $\rho$, respectively. Each element of the vector $\boldsymbol{p}$ is  $p_k=\langle k\vert\rho\vert k\rangle$ (and similar for $\boldsymbol{p_0}$), where $\ket{k}$ are energetic eigenstates of the system. The population dynamics induced by an enhenced thermal operation $\mathcal{E}^{\rm EnTO}$ can be written as
\begin{equation}
	\rho= \mathcal{E}^{\rm EnTO}(\rho_0)\ \Rightarrow \  \boldsymbol{p}=G \boldsymbol{p_0}.\label{eq:popu}
\end{equation}
Here $G$ is a matrix of transition probabilities $G_{k'k}=p_{k'|k}=\langle k'\vert \mathcal{E}(\vert k\rangle\langle k\vert)\vert k'\rangle$ from state $\ket{k}$ to $\ket{k'}$. From property (ii), the population dynamics $G$ induced by EnTO is a stochastic matrix that preserve the Gibbs distribution. Such matrices, also called the Gibbs-stochastic matrices, can be realized by TO  \cite{Horodecki2013}. Hence, when only population dynamics is concerned, TO is equivalent to EnTO.
 
As shown in Refs.  \cite{PhysRevLett.115.210403,PhysRevX.5.021001}, the coherence dynamics between energy levels depends on both initial coherence of quantum state and transition probabilities. For a quantum state $\rho$ expanded in its energy eigenbasis $\rho=\sum_{i,j}\rho_{ij}\vert i\rangle\langle j\vert$, a mode of coherence is defined as an operator $\rho^{(\omega)}$ composed of coherence terms between degenerate gaps
	\begin{equation}
	\rho^{(\omega)}=\sum_{i,j:E_i-E_j=\hbar\omega}\rho_{ij}\vert i\rangle\langle j\vert.
	\end{equation}
	The output coherence term after the action of an enhanced thermal operation is bounded as \cite{PhysRevLett.115.210403,PhysRevX.5.021001}
	\begin{equation}
	\vert\rho_{ij}\vert\leq{\sum_{c,d}}'\vert\rho_{0,cd}\vert\sqrt{p_{i\vert c}p_{j\vert d}},\label{7}
	\end{equation}	
	where the primed sum $\sum'$ refers to the summation over the indices $c, d$ which satisfy $E_c-E_d=E_i-E_j$, and $\rho_{0,cd}=\bra{c}\rho_0\ket{d}$. 
 
If $\rho_0$ and $\rho$ are qubit states, TO can saturate the bound in Eq. (\ref{7}). However, for higher dimension systems, there exist situations where TO cannot achieve the bound in Eq. (\ref{7}) exactly, while EnTO can \cite{PhysRevLett.115.210403}.
	
\section{Setup}
In this paper, the main system we focus on is a three-dimensional quantum system whose Hamiltonian reads
\begin{equation}
	H_S=\sum_{m=0}^{2}m\hbar\omega\vert m\rangle\langle m\vert.
	\label{8}
\end{equation}
The heat bath is at a fixed inverse temperature $\beta$. Here and following, we will label $q\equiv e^{-\beta\hbar\omega}$.
	
Because the main purpose of this paper is to disprove the closure conjecture, a counterexample is sufficient. Thus in the rest of the paper, we mainly consider the initial state $\ket{\psi_0}=\frac{1}{\sqrt{2}}(\ket{0}+\ket{1})$, though most of our discussions can be applied to general input states.

The set of states which can be obtained from a given input state $\rho$ via a set of operations $X$ is called the $X$ cone of $\rho$, labeled as $\mathcal{C}^X(\rho):=\{\rho':\rho'=\mathcal{E}(\rho),\mathcal{E}\in X\}$. The gap between two sets of operations $X_1$ and $X_2$ can be indicated from a gap between $X_1$ and $X_2$ cone of a given state. In the following, we explicitly calculate the EnTO cone of the state $\rho_0=\ket{\psi_0}\bra{\psi_0}$, and show that the gap between TO and EnTO cones of $\rho_0$ is non-negligible.

From Eqs. (\ref{eq:popu}) and (\ref{7}), any state in the EnTO (or TO) cone of $\rho_0$ is in the following form
\begin{equation}
    \rho=\left(
    \begin{array}{ccc}
        p_0 & |\rho_{10}|e^{i\phi_1} & 0 \\
        |\rho_{10}|e^{-i\phi_1} & p_1 & |\rho_{21}|e^{i\phi_2} \\
        0 & |\rho_{21}|e^{i\phi_2} & 1-p_0-p_1
    \end{array}
    \right).
\end{equation}
Further, all states in the above form are equivalent to states with $\phi=0$ by covariant unitary operators $U(\phi)=\mathrm{diag}(e^{-i\phi_1},1,e^{i\phi_2})$. Therefore, the EnTO (or TO) cone of $\rho_0$ is fully described by $(p_0,p_1,|\rho_{10}|,|\rho_{21}|)$, and hence can be presented in a four-dimensional parameter space. Here and following, we will focus on the projections of the cones on the tree-dimention parameter space $(p_0,p_1,|\rho_{10}|)$. Clearly, a gap between the projections of cones indicates a gap between the cones.

\section{the gap between TO and EnTO}	
\subsection{EnTO cone}
Here we analytically calculate the EnTO cone of the initial state $\rho_0$, and visualize it in the three-dimensional parameter space spanned by $(p_0,p_1,\vert\rho_{10}\vert)$, see Fig. \ref{fig1}. The range of the output populations is derived from the thermo-majorization relation \cite{Horodecki2013}.

	\begin{figure}
		\includegraphics[scale=0.85]{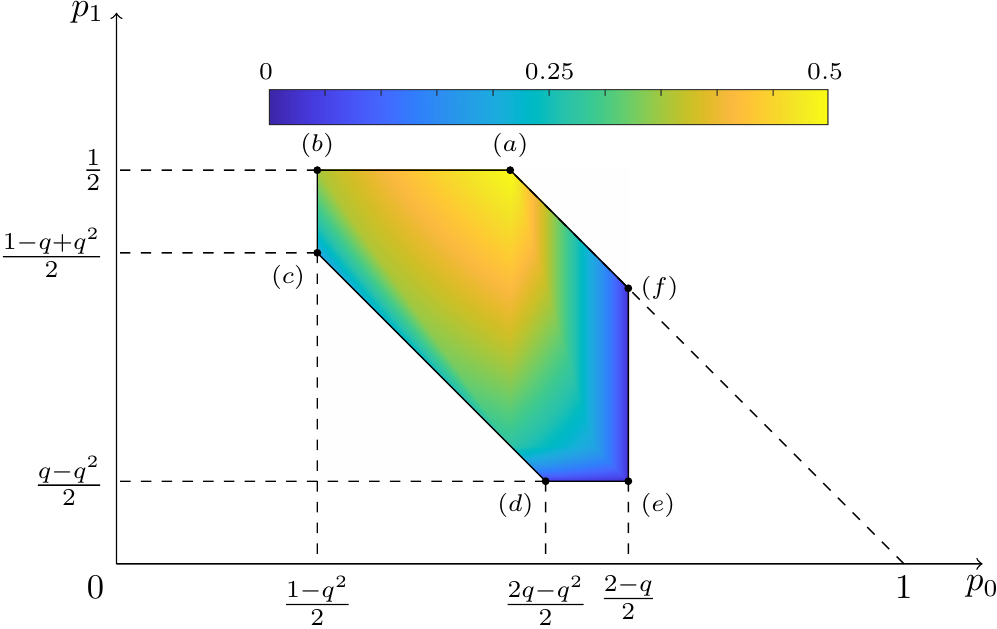}
		\centering
		\caption{EnTO cone of the qutrit state $\ket{\psi_0}=\frac{1}{\sqrt{2}}(\ket{0}+\ket{1})$. The maximum value of $\vert\rho_{01}\vert$ is presented as the color bar. Here the parameter $q\equiv e^{-\beta\hbar\omega}$.}
		\label{fig1}
	\end{figure}

The maximum value of $\vert\rho_{10}\vert$ for given $(p_0,p_1)$ is obtained as follows. According to Eq. (\ref{7}), we have $\vert\rho_{10}\vert\leq\frac12\sqrt{G_{00}G_{11}}$. Thus the upper bound of $\vert\rho_{10}\vert$ reads
\begin{eqnarray}
&& \max \frac12\sqrt{G_{00}G_{11}} \nonumber\\
&s.t. \ \  & G\boldsymbol{\gamma}=\boldsymbol{\gamma}, G\boldsymbol{p_0}=\boldsymbol{p}, G^{\mathrm{T}}\mathbb{I}=\mathbb{I}, G_{ij}\in[0,1].\label{eq:max_coh}
\end{eqnarray}
where $\boldsymbol{\gamma}=(1,q,q^2)^\mathrm{T}$, $\boldsymbol{p_0}=(\frac12,\frac12,0)^\mathrm{T}$, $\boldsymbol{p}=(p_0,p_1,1-p_0-p_1)^\mathrm{T}$, and $\mathbb{I}=(1,1,1)^\mathrm{T}$.
This bound is reached by the EnTO with Kraus operators
\begin{equation}
\label{eq:kraus} K^{(n)}=\sum_{i,j=0:i-j=n}^{2}\sqrt{G^\star_{ij}}\vert i\rangle\langle j\vert,
\end{equation}
where $n=-2,-1,0,1,2$ and $G^\star_{ij}$ denotes the elements of the optimal transition matrix $G^\star$ which reach the maximum in Eq. (\ref{eq:max_coh}).
We analytically solve the problem in Eq. (\ref{eq:max_coh}) (see Appendix \ref{app:ento_cone} for details) and presented the result in Fig. \ref{fig1}. As shown in this figure, the extreme states of the EnTO cone is continuous, namely, a small perturbation in the output population would only result in a small variance of the maximum value of $\vert\rho_{10}\vert$.

\subsection{Optimal output coherence via TO}
In order to derive the maximal value of $|\rho_{10}|$ via TO, we start from the general form of the Hamiltonian of the heat bath
\begin{equation}
	H_R=\sum_{E_R}E_R\Pi_{E_R},
\end{equation}
where $E_R$ are the eigenvalues of energy, and $\Pi_{E_R}$ is the projector to the eigenspace of $E_R$. Here we assume that the degeneracy of $E_R$ is monotonically non-decreasing with $E_R$.

Without loss of generality, each eigenvalue $E_R$ can be expressed as
$E_R=n\hbar\omega+\xi\equiv E(\xi,n)$, where $n$ is a non-negative integer and $\xi\in[0,\hbar\omega)$.
Accordingly, $\Pi_{E_R}\equiv\Pi_{\xi,n}$.
Now we divide the Hilbert space of $R$ as $\mathcal{H}_R=\bigoplus_{\xi}\mathcal{H}_\xi$ with $\mathcal{H}_\xi=\bigoplus_{n}\mathcal{H}_{\xi,n}$ and $\mathcal{H}_{\xi,n}$ the eigenspace of $E(\xi,n)$.
The Hamiltonian of the heat bath is then rewritten as $H_R=\bigoplus_{\xi}H_{\xi}$, where $H_{\xi}=\sum_{n}(n\hbar\omega+\xi)\Pi_{\xi,n}$ is the Hamiltonian acting on $\mathcal H_\xi$. It follows that the thermal state of $R$ reads 
\begin{equation}
    \gamma_R=\bigoplus_{\xi}p_\xi\gamma_{\xi},\label{eq:gamma}
\end{equation}
where $\gamma_\xi=e^{-\beta H_{\xi}}/\mathrm{Tr}\left(e^{-\beta H_{\xi}}\right)$ is the thermal state of $H_{\xi}$, $p_\xi=\mathrm{Tr}(e^{-\beta H_{\xi}})/[\sum_{\xi'}\mathrm{Tr}(e^{-\beta H_{\xi'}})]$ satisfies $\sum_\xi p_\xi=1$.

Further, the Hamiltonian of the total system reads
\begin{equation}
	H_{SR}=\bigoplus_{\xi}H_{SR_\xi},
\end{equation}
where $H_{SR_\xi}=H_S\otimes\mathbb{I}_{\xi}+\mathbb{I}_{S}\otimes H_{\xi}$ is the Hamiltonian acting on subspace $\mathcal{H}_S\otimes\mathcal H_{\xi}$. Importantly, for $H_S=\sum_m m\hbar\omega\vert m\rangle\langle m\vert$, the eigenvalues of $H_{SR_\xi}$ and $H_{SR_{\xi'}}$ (where $\xi'\neq \xi$) does not have an overlap. Therefore, any joint unitary which satisfies $[U,H_{SR}]=0$ is in a block-diagonal form
\begin{equation}
    U=\bigoplus_{\xi}U_\xi,\label{eq:u}
\end{equation}
where $U_\xi$ acts on $\mathcal{H}_S\otimes\mathcal H_{\xi}$ and satisfies $[U_\xi,H_{SR_\xi}]=0$. Now we introduce a lemma which will simplify our subsequent analysis.
	 
\begin{lemma} \label{lemma1}
For a system with Hamiltonian $H_S=\sum_m m\hbar\omega\vert m\rangle\langle m\vert$, any thermal operation can be written as $\mathcal{E}^{\rm TO}=\sum_\xi p_\xi \mathcal{E}_{\xi}$, where $\mathcal{E}_{\xi}$ is a thermal operation induced by a heat bath with Hamiltonian $H_{\xi}=\sum_{n}(n\hbar\omega+\xi)\Pi_{\xi,n}$.
\end{lemma}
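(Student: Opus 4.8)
The plan is to simply unfold the definition of a thermal operation using the block decompositions of $\gamma_R$ and $U$ that have already been established in Eqs.~(\ref{eq:gamma}) and (\ref{eq:u}). Starting from $\mathcal{E}^{\rm TO}(\rho_S)=\mathrm{Tr}_R[U(\rho_S\otimes\gamma_R)U^\dagger]$, I would substitute $\gamma_R=\bigoplus_\xi p_\xi\gamma_\xi$, so that $\rho_S\otimes\gamma_R=\bigoplus_\xi p_\xi(\rho_S\otimes\gamma_\xi)$ viewed as an operator on $\mathcal{H}_S\otimes\mathcal{H}_R=\bigoplus_\xi(\mathcal{H}_S\otimes\mathcal{H}_\xi)$. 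The crucial input is that energy conservation $[U,H_{SR}]=0$ forces $U=\bigoplus_\xi U_\xi$ with $[U_\xi,H_{SR_\xi}]=0$; this is precisely Eq.~(\ref{eq:u}), and rests on the fact that the spectra of $H_{SR_\xi}$ for distinct $\xi\in[0,\hbar\omega)$ are disjoint — their eigenvalues are $(m+n)\hbar\omega+\xi$, and two such values coincide only if $\xi-\xi'$ is an integer multiple of $\hbar\omega$, which forces $\xi=\xi'$.

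Given these two decompositions, the conjugation acts sectorwise, $U(\rho_S\otimes\gamma_R)U^\dagger=\bigoplus_\xi p_\xi\, U_\xi(\rho_S\otimes\gamma_\xi)U_\xi^\dagger$, and I would then take the partial trace over $R$. Because this operator is block-diagonal with respect to the orthogonal decomposition $\mathcal{H}_R=\bigoplus_\xi\mathcal{H}_\xi$, tracing over a basis of $\mathcal{H}_R$ assembled from bases of the individual $\mathcal{H}_\xi$ produces no cross terms, so $\mathrm{Tr}_R$ commutes with the direct sum and yields $\mathcal{E}^{\rm TO}(\rho_S)=\sum_\xi p_\xi\,\mathrm{Tr}_{\mathcal{H}_\xi}[U_\xi(\rho_S\otimes\gamma_\xi)U_\xi^\dagger]$. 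Defining $\mathcal{E}_\xi(\cdot):=\mathrm{Tr}_{\mathcal{H}_\xi}[U_\xi(\,\cdot\,\otimes\gamma_\xi)U_\xi^\dagger]$ then gives exactly the claimed form $\mathcal{E}^{\rm TO}=\sum_\xi p_\xi\mathcal{E}_\xi$.

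It remains to verify that each $\mathcal{E}_\xi$ is genuinely a thermal operation with bath Hamiltonian $H_\xi$: the state $\gamma_\xi=e^{-\beta H_\xi}/\mathrm{Tr}(e^{-\beta H_\xi})$ is the Gibbs state of $H_\xi$ at the same inverse temperature $\beta$, and $U_\xi$ commutes with $H_{SR_\xi}=H_S\otimes\mathbb{I}_\xi+\mathbb{I}_S\otimes H_\xi$, which is precisely the energy-conservation condition for a heat bath described by $H_\xi$; since $\sum_\xi p_\xi=1$ and each $\mathcal{E}_\xi$ is CPTP, the expression is also a legitimate convex mixture. I do not expect a substantive obstacle: essentially all the content of the lemma has already been absorbed into the block-diagonal structure of $U$ granted by the setup, and the only points needing a line of care are the interchange of the partial trace with $\bigoplus_\xi$ and — if one wants full rigor for a bath with continuous spectrum — reading the sum over $\xi$ as an integral, the paper's notation $\sum_\xi$ being understood in this generalized sense.
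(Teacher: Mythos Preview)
Your proposal is correct and follows essentially the same route as the paper's own proof: substitute the block decompositions of $\gamma_R$ and $U$ from Eqs.~(\ref{eq:gamma}) and (\ref{eq:u}) into the definition of a thermal operation, observe that everything acts sectorwise so the partial trace reduces to a sum over $\xi$, and then verify that each $\mathcal{E}_\xi$ is itself a thermal operation for the bath Hamiltonian $H_\xi$. Your additional remarks about why $\mathrm{Tr}_R$ commutes with $\bigoplus_\xi$ and the possible need to read $\sum_\xi$ as an integral are sensible elaborations, but the core argument is identical to the paper's.
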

	 
\begin{proof} 
From Eqs. (\ref{eq:gamma}) and (\ref{eq:u}), a thermal operation acting on $\rho_S$ reads
	\begin{equation}
	\begin{split}
	\mathcal{E}^{\rm TO}(\rho_S)&={\rm Tr}_R[U(\rho_{S}\otimes\gamma_{R})U^{\dagger}]\\
	&={\rm Tr}_{R}(\bigoplus_{\xi}U_\xi(\rho_S\otimes\bigoplus_{\xi}p_\xi\gamma_\xi)\bigoplus_{\xi}U_\xi^{\dagger})\\
	&=\sum_{\xi}p_\xi{\rm Tr}_{R_\xi}[U_\xi(\rho_S\otimes\gamma_\xi)U_\xi^{\dagger}]\\
	&=\sum_\xi p_\xi \mathcal{E}_{\xi}(\rho_{S}).
	\end{split}
	\end{equation}
Here $\mathcal{E}_{\xi}(\rho_{S})={\rm Tr}_{R_\xi}[U_\xi(\rho_S\otimes\gamma_\xi)U_\xi^{\dagger}]$ is a thermal operation induced by a heat bath with Hamiltonian $H_{\xi}$, because $\gamma_\xi$ is the thermal state of $H_\xi$, and $U_\xi$ satisfies $[U_\xi,H_{SR_\xi}]=0$. This completes the proof.
\end{proof}

Next, we will first consider the output population at point (b) in Fig. \ref{fig1}, and show a gap between the maximum values of $|\rho_{10}|$ which can be preserved by EnTO and by TO. Then we will show that this gap cannot be closed approximately, namely, the extreme state at point (b) of the EnTO cone cannot be reached by TO, even approximately.

\subsubsection{Maximum coherence via TO for output population of point (b)}
At point (b) in Fig. \ref{fig1}, the output population reads $p_0=\frac{1-q^2}{2}$ and $p_1=\frac12$. For this output population,
the transition matrix is uniquely fixed as  
	\begin{equation}
	G^{(b)}={
		\left[ \begin{array}{ccc}
		1-q^2 & 0 & 1\\
		0 & 1 & 0\\
		q^2 & 0 & 0
		\end{array} 
		\right ]}.
	\label{9}
	\end{equation}
Due to the equivalence between TO and EnTO in terms of population dynamics, this transition matrix can also be realized by TO. 
In the following, we will maximize the value of $|\rho_{10}|$ over all thermal operations which achieve the transition matrix as in Eq. (\ref{9}). The maximum value of $|\rho_{10}|$ is denoted as $|\rho_{10}^{\rm TO}|$.

From Lemma \ref{lemma1}, any thermal operation acting on our qutrit system can be written as a convex roof of thermal operations $\mathcal E_\xi$ based on heat baths with a fixed energetic gap $H_\xi=\sum_n(n\hbar\omega+\xi)\Pi_{\xi,n}$. Thus we have $G=\sum_\xi p_\xi G_\xi$, where $G_\xi$ is the transition matrix corresponding to $\mathcal E_\xi$. Meanwhile, it is easy to check that for $G^{(b)}$ as in Eq. (\ref{9}), $G^{(b)}=\sum_\xi p_\xi G_\xi$ if and only if $G_\xi=G^{(b)}$ for all $\xi$.
It means that in order to realize the transition matrix as in Eq. (\ref{9}) by TO, each $\mathcal E_\xi$ should also achieve this transition matrix. Hence, for any thermal operation $\mathcal{E}^\mathrm{TO}$ which can realize $G^{(b)}$, we have
\begin{eqnarray}
    \big|\bra{1}\mathcal{E}^\mathrm{TO}(\rho_0)\ket{0}\big| & \leq & \sum_\xi p_\xi \big|\bra{1}\mathcal{E}_\xi(\rho_0)\ket{0}\big| \nonumber\\
    & \leq & \max_{\xi}\big| \bra{1}\mathcal{E}_\xi(\rho_0)\ket{0} \big|.
\end{eqnarray}
Therefore, the maximum value of $|\rho_{10}|$ which can be achieved by TO equals to the maximum value realizable by thermal operations based on a heat bath with Hamiltonian $H_\xi$. Further, the value of $\xi$ does not affect the state transition of $S$, so we set $\xi=0$ without loss of generality. The effective Hamiltonian of the heat bath then reads $H_R=\sum_n n\hbar\omega\Pi_n$, and the Gibbs state of $R$ is
\begin{equation}
  \gamma_R=\sum_n\gamma_n\Pi_n.  \label{eq:gamma}
\end{equation}
The corresponding thermal operations are denoted as as $\mathcal{E}_0^{\rm TO}$.

The joint unitary $U$ is in the block-diagonal form $U=\bigoplus_{k=0}^\infty U^{(k)}$, where each block $U^{(k)}$ lives in a subspace with total energy $k\hbar\omega$. Precisely, $U^{(k)}$ is written as
\begin{equation}
    U^{(k)}=\sum_{i,j=0}^{\min\{2,k\}}\ket{i}\bra{j}\otimes u_{ij}^k, \label{eq:uk}
\end{equation}
where $\ket{i}$ and $\ket{j}$ are energetic states of the system $S$, $u_{ij}^k$ is a  matrix of dimension $d_{k-i}\times d_{k-j}$, and $d_n$ denotes the degeneracy of the eneragy level $E_R=n\hbar\omega$ of the heat bath.

Now we define vectors $\vec{\mathcal{U}_{ij}}$, whose $k$th entry is $(\vec{\mathcal{U}_{ij}})_k=\sqrt{\gamma_{k-j}}u_{ij}^k$ with $k=\max\{i,j\},\max\{i,j\}+1,\dots$ and $i,j=0,1,2$. For example,
\begin{eqnarray}
    \vec{\mathcal{U}_{00}} & = & (\sqrt{\gamma_0} u_{00}^0,\sqrt{\gamma_1} u_{00}^1,\dots),\\
    \vec{\mathcal{U}_{11}} & = & (\sqrt{\gamma_0} u_{11}^1,\sqrt{\gamma_1} u_{11}^2,\dots).
\end{eqnarray}
Further, for two such vectors $\vec{\mathcal A}$ and $\vec{\mathcal B}$ which satisfy that for all $l$, $\mathcal A_l$ and $\mathcal B_l$ are matrices of the same size, we define the inner product $(\vec{\mathcal A},\vec{\mathcal B})\equiv\sum_l\mathrm{Tr}(\mathcal A_l^\dagger\mathcal B_l)$. Then the inner product of vectors
$\vec{\mathcal{U}_{ij}}$ and $\vec{\mathcal{U}_{i'j'}}$ satisfying $i-i'=j-j'$ reads
\begin{equation}
    (\vec{\mathcal{U}_{i'j'}},\vec{\mathcal{U}_{ij}})=\sum_{k=\max\{i,j\}}^\infty\gamma_{k-j}\mathrm{Tr}(u_{i'j'}^{(k-j+j')\dagger}u_{ij}^k).\label{eq:inner_pro1}
\end{equation}
By substituting Eqs. (\ref{eq:gamma}) and (\ref{eq:uk}) to Eq. (\ref{1}), we obtain that for $i-i'=j-j'$,
\begin{equation}
    \bra{i}\mathcal{E}^\mathrm{TO}_0(\ket{j}\bra{j'})\ket{i'}=(\vec{\mathcal{U}_{i'j'}},\vec{\mathcal{U}_{ij}}).\label{eq:inner_pro}
\end{equation}
In particular, when $i=i'$ and $j=j'$, Eq. (\ref{eq:inner_pro}) reduces to the transition probability from $j$ to $i$
\begin{equation}
    p_{i|j}=G_{ij}=(\vec{\mathcal{U}_{ij}},\vec{\mathcal{U}_{ij}}).\label{eq:pij}
\end{equation}
Moreover, when $i=j=1$ and $i'=j'=0$, we have $\bra{1}\mathcal{E}_0^{\rm TO}(\ket{1}\bra{0})\ket{0}= (\vec{\mathcal{U}_{00}},\vec{\mathcal{U}_{11}})$. Reminding that $|\bra{1}\mathcal{E}_0^{\rm TO}(\rho_0)\ket{0}|=\frac12|\bra{1}\mathcal{E}_0^{\rm TO}(\ket{1}\bra{0})\ket{0}|$, we obtain the following
\begin{eqnarray}
    |\rho_{10}^{\rm TO}|&=&\max \frac12 |(\vec{\mathcal{U}_{00}},\vec{\mathcal{U}_{11}})| \nonumber\\
    s.t. && (\vec{\mathcal{U}_{00}},\vec{\mathcal{U}_{00}})=1-q^2,
    (\vec{\mathcal{U}_{11}},\vec{\mathcal{U}_{11}})=1.
    \label{eq:rho10_TO}
\end{eqnarray}
As we prove in Appendix \ref{app:svd}, the maximum in Eq. (\ref{eq:rho10_TO}) is reached by joint unitary operators whose main blocks $u_{jj}^k$ ($j=0,1,2$, $k\geq j$) are diagonal matrices $M_{jj}^k$ whose entries $\sigma_s(M_{jj}^k)\ (s=0,\dots,d_{k-j}-1)$ satisfy $\sigma_s(M_{jj}^k)\in[0,1]$ and are in a non-increasing order.

From Eq. (\ref{eq:pij}), if $G_{ij}=0$ for some $i$ and $j$, then $\vec{\mathcal{U}_{ij}}=0$, which means that $u_{ij}^k=\boldsymbol{0}$, $\forall k$. Therefore, the joint unitary in the thermal operation which realizes the transition matrix as in Eq. (\ref{9}) and can reach the maximum as in Eq. (\ref{eq:rho10_TO}) should be in the following form: $U^{(0)}=M_{00}^0$,
\begin{equation}
U^{(1)}=
	\left[
	\begin{array}{c|c}
	M_{00}^{1}& \textbf{0}\\ \hline 
	\textbf{0}& M_{11}^{1}
	\end{array}
	\right],\label{eq:u1}
\end{equation}
and for $k\geq2$,
\begin{equation}
U^{(k)}=
\left[
\begin{array}{c|c|c}
	M_{00}^{k}& \textbf{0}& u_{02}^{k} \\ \hline 
	\textbf{0}& M_{11}^{k}&  \textbf{0} \\ \hline
	u_{20}^{k}& \textbf{0}& \textbf{0}
\end{array}
\right].\label{eq:uk_matrix}
\end{equation}

Because each block $U^{(k)}$ is a unitary matrix, we have $M_{00}^0=\iden_{d_0}$, $M_{00}^1=\iden_{d_1}$, and $M_{11}^k=\iden_{d_{k-1}}$ for $k\geq1$. Further, for $k\geq2$, we prove in Appendix \ref{app:tilde_uk} that
\begin{equation}
    \sigma_s(M_{00}^k)=\bigg\{\begin{array}{cc}
      1,   &  s=0,\dots,d_{k}-d_{k-2}-1,\\
      0,   & s=d_k-d_{k-2},\dots,d_k-1.
    \end{array}\label{eq:sigmas}
\end{equation}
It follows that
\begin{eqnarray}
   \Tr(M_{00}^{k\dagger}M_{00}^k)&=&\Tr(M_{00}^{k\dagger}M_{11}^{k+1})\nonumber\\
   &=&\bigg\{\begin{array}{cc}
      d_{k}  &  k=0,1,\\
      d_{k}-d_{k-2}  & k\geq2.
   \end{array}  
\end{eqnarray}
It can be directly checked that this unitary can indeed achieve the transition matrix as in Eq. (\ref{9}). For example,
\begin{eqnarray}
    G_{00}& = & (\vec{\mathcal U_{00}},\vec{\mathcal U_{00}}) =\sum_{k=0}^\infty \gamma_{k}\mathrm{Tr}[M_{00}^{k\dagger}M_{00}^{k}]\nonumber\nonumber\\
    &=&\gamma_0 d_0+\gamma_1 d_1+\sum_{k=2}^\infty \gamma_{k}(d_{k}-d_{k-2})=1-q^2.
\end{eqnarray}
For the last equality, we use the fact that $\gamma_{k}=q^2\gamma_{k-2}$ and hence $\sum_{k=2}^\infty\gamma_kd_{k-2}=q^2\sum_{k=2}^\infty\gamma_{k-2}d_{k-2}=q^2$.

Now we are ready to calculate the following
\begin{eqnarray}
    |\rho_{10}^{\rm TO}| & = & \frac12(\vec{\mathcal{U}_{00}},\vec{\mathcal{U}_{11}})\nonumber\\
    & = & \frac12\sum_{k=0}^\infty \gamma_{k}\mathrm{Tr}[M_{00}^{k\dagger}M_{11}^{k+1}]\nonumber\\
    & = &\frac12 G_{00} = \frac12(1-q^2).\label{eq:to}
\end{eqnarray}
However, from Eq. (\ref{eq:max_coh}), the maximal output coherence via EnTO reads
\begin{equation}
	\vert\rho_{10}^{\rm EnTO}\vert=\frac12\sqrt{1-q^2}.
	\label{10}
\end{equation}	
Comparing Eqs. (\ref{eq:to}) and (\ref{10}), we observe the non-negligible gap between $\vert\rho_{10}^{\rm EnTO}\vert$ and $\vert\rho_{10}^{\rm TO}\vert$,
\begin{equation}
\begin{aligned}
\Delta_{10}&=\vert\rho_{10}^{\rm EnTO}\vert-\vert\rho_{10}^{\rm TO}\vert\\
&=\frac12[\sqrt{1-q^2}-(1-q^2)].
\end{aligned}
\end{equation}
It means that the states with $p_0=\frac{1-q^2}{2}$, $p_1=\frac12$ and $|\rho_{10}|\in(\frac{1-q^2}{2},\frac{\sqrt{1-q^2}}{2}]$ are in EnTO cone but not in TO cone of $\rho_0$. This extends the example in Ref. \cite{PhysRevLett.115.210403}, where only one state in $\mathcal{C}^{\rm EnTO}(\rho_0)\backslash\mathcal{C}^{\rm TO}(\rho_0)$ is found.


\subsubsection{Disproof of closure conjecture.}
Here we prove that the extreme state at point (b) of Fig. \ref{fig1} in the EnTO cone of $\rho_0$, i.e., the state with $(p_0,p_1,|\rho_{10}|)=\frac12(1-q^2,1,\sqrt{1-q^2})$, cannot be reached by TO approximately, if the temperature is proper such that $1-q-q^2\gg0$ and $q\gg0$ (or equivalently, $0\ll q\ll \frac{\sqrt{5}-1}{2}$).

Precisely, we consider states in TO cone of $\rho_0$ with $(p_0^{\epsilon_0},p_1^{\epsilon_0},|\rho_{10,\epsilon_0}|)$. Here $p_0^{\epsilon_0},p_1^{\epsilon_0}$ is a valid population distribution in the neighbourhood of point (b), i.e., $p_0^{\epsilon_0}-\frac12(1-q^2)\leq\epsilon_0$ and $\frac{1}{2}-p_1^{\epsilon_0}\leq\epsilon_0$, where $\epsilon_0$ is small. We will prove a non-negligible gap between the maximum value of $|\rho_{10,\epsilon_0}|$ and $|\rho_{10}^{\rm EnTO}|=\frac12\sqrt{1-q^2}$.

From the linearity of the population dynamics, a perturbation in the output population distribution results from a perturbation in the transition matrix. Together with Lemma \ref{lemma1}, it is sufficient to set the Gibbs state of $R$ as in Eq. (\ref{eq:gamma}) and restrict the entries of the perturbed transition matrix $G^{(b)}_\epsilon$ as $|G^{(b)}_{ij,\epsilon}-G^{(b)}_{ij}|\leq\epsilon$. Therefore, we define
\begin{eqnarray}
    |\rho^{\rm TO}_{10,\epsilon}| & = & \max \frac12 |(\vec{\mathcal{U}^\epsilon_{00}},\vec{\mathcal{U}^\epsilon_{11}})|\nonumber\\
    s.t. & & |(1-q^2)-(\vec{\mathcal{U}^\epsilon_{00}},\vec{\mathcal{U}^\epsilon_{00}})|\leq \epsilon,\nonumber\\
    & & 1-(\vec{\mathcal{U}^\epsilon_{11}},\vec{\mathcal{U}^\epsilon_{11}})\leq \epsilon.\label{eq:rho10_epsilon}
\end{eqnarray}
Here, the two restrictions come from $|G^{(b)}_{00,\epsilon}-G^{(b)}_{00}|\leq\epsilon$ and $|G^{(b)}_{11,\epsilon}-G^{(b)}_{11}|\leq\epsilon$, respectively. The following inequality is the main result of this section
\begin{eqnarray}
    \Delta_{10}^\epsilon & \equiv & |\rho_{10}^{\rm EnTO}|-|\rho_{10,\epsilon}^{\rm TO}|\nonumber\\
    & > & \frac14(1-\sqrt{1-q^2})^2(1-q-q^2)(1-\delta)-2\epsilon,\label{eq:gap_epsilon}
\end{eqnarray}
where $\delta$ is small.
It means that when the temperature is chosen properly such that $q$ and $1-q-q^2$ are not small, the output coherence $|\rho_{10}^{\rm EnTO}|=\frac12\sqrt{1-q^2}$ cannot be reached approximately by thermal operations which realizes a perturbed population dynamics.

In the following, we sketch the proof of Eq. (\ref{eq:gap_epsilon}), and leave the rigorous proof to Appendix \ref{app:gap}.

From Lemma \ref{le:optimalU} in Appendix \ref{app:svd}, the maximum in Eq. (\ref{eq:rho10_epsilon}) is reached by joint unitary operators with $u^k_{jj,\epsilon}=M^k_{jj,\epsilon}$, where $M^k_{jj,\epsilon}$ are diagonal matrices with non-negative entries in a non-increasing order. Hence $|(\vec{\mathcal{U}^\epsilon_{00}},\vec{\mathcal{U}^\epsilon_{11}})|=(\vec{\mathcal{U}^\epsilon_{00}},\vec{\mathcal{U}^\epsilon_{11}})=(\vec{\mathcal{U}^\epsilon_{11}},\vec{\mathcal{U}^\epsilon_{00}})$. Now we define
\begin{equation}
    \alpha=\frac{(\vec{\mathcal{U}^\epsilon_{00}},\vec{\mathcal{U}^\epsilon_{11}})}{(\vec{\mathcal{U}^\epsilon_{11}},\vec{\mathcal{U}^\epsilon_{11}})},\ \vec{\mu}=\vec{\mathcal{U}^\epsilon_{00}}-\alpha\vec{\mathcal{U}^\epsilon_{11}}.\label{eq:alpha_mu}
\end{equation}
Direct calculation leads to
\begin{equation}
    p_{0|0}^\epsilon p_{1|1}^\epsilon-(\vec{\mathcal{U}^\epsilon_{00}},\vec{\mathcal{U}^\epsilon_{11}})^2=p_{1|1}^\epsilon(\vec{\mu},\vec{\mu}),\label{eq:p_mu}
\end{equation}
where $p_{0|0}^\epsilon=(\vec{\mathcal{U}^\epsilon_{00}},\vec{\mathcal{U}^\epsilon_{00}})\in[1-q^2-\epsilon,1-q^2+\epsilon]$ and $p_{1|1}^\epsilon=(\vec{\mathcal{U}^\epsilon_{11}},\vec{\mathcal{U}^\epsilon_{11}})\in[1-\epsilon,1]$.
Clearly, $\sqrt{p_{0|0}^\epsilon p_{1|1}^\epsilon}\geq(\vec{\mathcal{U}^\epsilon_{00}},\vec{\mathcal{U}^\epsilon_{11}})$. Together with $p_{0|0}^\epsilon<p_{1|1}^\epsilon$, we obtain
\begin{equation}
    \frac12\sqrt{p_{0|0}^\epsilon p_{1|1}^\epsilon}-\frac12(\vec{\mathcal{U}^\epsilon_{00}},\vec{\mathcal{U}^\epsilon_{11}})>\frac14 (\vec{\mu},\vec{\mu}).\label{eq:gap_appro1}
\end{equation}
Because $\frac12\sqrt{p_{0|0}^\epsilon p_{1|1}^\epsilon}$ is $\epsilon$-close to $|\rho_{10}^{\rm EnTO}|$, and the maximum value of $\frac12(\vec{\mathcal{U}^\epsilon_{00}},\vec{\mathcal{U}^\epsilon_{11}})$ equals to $|\rho_{10,\epsilon}^{\rm TO}|$, Eq. (\ref{eq:gap_appro1}) means that the gap $\Delta_{10}^\epsilon$ cannot be closed if  $(\vec\mu,\vec\mu)$ is not small.

From the definition as in Eq. (\ref{eq:alpha_mu}) and the fact that $p_{1|1}^\epsilon\in[1-\epsilon,1]$, we obtain the following
\begin{equation}
    (\vec\mu,\vec\mu)>\sum_{k=0}^\infty\gamma_k\sum_{s=0}^{d_k-1}[\sigma_s(M_{00,\epsilon}^k)-\alpha]^2-2\alpha\epsilon.\label{eq:mumu_c}
\end{equation}
In order to evaluate the right hand side of Eq. (\ref{eq:mumu_c}), we assume the heat bath is large and satisfies the following:\\
(i) There is a set of energy levels $\mathcal R$, such that $\sum_{E_R\in\mathcal R}{\rm Tr}(\gamma_R\Pi_{E_R})=1-\delta$, where $\delta$ is small.\\
(ii) For any $E_R\in\mathcal R$, $E_R\pm m\hbar\omega\in\mathcal R$, where $m=1,2$.\\
(iii) For $E_R\in\mathcal R$, the degeneracies $g(E_R)$ satisfy $|\frac{g(E_R-m\hbar\omega)}{g(E_R)q^m}-1|\leq\delta$, where $m=1,2$.\\
These assumptions have been employed in Ref. \cite{Horodecki2013} to derive the famous thermo-majorization relation.

By assumption (iii) and the condition $1-q-q^2\gg0$,  we have
\begin{equation}
    d_k-d_{k-1}-d_{k-2}\geq [(1-q-q^2)-(q+q^2)\delta]d_k>0,
\end{equation}
for $k\hbar\omega\in\mathcal R$.
The unitarity of the block $U^{(k)}_\epsilon$ then ensures that, at least $d_k-(d_{k-1}+d_{k-2})$ singular valuses of $M_{00,\epsilon}^k$ are equal to 1. By subtracting some non-negative terms from the right hand side of Eq. (\ref{eq:mumu_c}), we obtain the following
\begin{eqnarray}
     (\vec\mu,\vec\mu)& > & \sum_{k:k\hbar\omega\in\mathcal R}\gamma_k\sum_{s:\sigma_s(M_{00,\epsilon}^k)=1}[\sigma_s(M_{00,\epsilon}^k)-\alpha]^2-2\alpha\epsilon\nonumber\\
     &=&\sum_{k:k\hbar\omega\in\mathcal R}\gamma_k(d_k-d_{k-1}-d_{k-2})(1-\alpha)^2-2\alpha\epsilon\nonumber\\
     &\geq&(1-\alpha)^2(1-q-q^2)(1-\delta)-2\alpha\epsilon.\label{eq:mumu1_c}
\end{eqnarray}
Here $\alpha\leq\sqrt{\frac{p_{0|0}^\epsilon}{p_{1|1}^\epsilon}}\leq \sqrt{\frac{1-q^2+\epsilon}{1-\epsilon}}\equiv\alpha(\epsilon)$ by definition. Clearly, $\alpha(\epsilon)$ is $\epsilon$-close to $\sqrt{1-q^2}$. Together with Eq. (\ref{eq:gap_appro1}), we arrive at Eq. (\ref{eq:gap_epsilon}).

It is worth mentioning that, although our derivation heavily depends on the condition $1-q-q^2\gg0$, this condition is by no means necessary for the gap. Moreover, the bound to the gap as in Eq. (\ref{eq:gap_epsilon}) is not tight. Nevertheless, these results are sufficient for the the purpose of this paper, which is to disprove the closure conjecture with a counterexample. We will leave the explicit problems, such as necessary conditions and tighter bounds on the gap, to future work.

\section{Conclusions}
We have disproved the closure conjecture with an analytic counterexample. We derive the EnTO cone of a given qutrit state, and calculate the optimal coherence preserved by TO for a given output population distribution. We also evaluate the upper bound on the coherence preserved by TO, if a small perturbation on the output population distribution is allowed. By doing so, we discover a state conversion under EnTO which cannot be approximated by TO.

Our findings show that thermal operations and enhanced thermal operations can lead to different laws of coherence evolution. Further, the methods we developed here can be used to evaluate the upper bound on the output coherence via thermal operations. Thus our results can contribute to studying the restrictions on coherence dynamics under TO.

\begin{acknowledgments}
This work was supported by National Natural Science Foundation of China under Grant No. 11774205, and the Young Scholars Program of Shandong University.
\end{acknowledgments}

\appendix

\section{Analytic expression for EnTO cone of $\rho_0$}\label{app:ento_cone}
In this section, we analytically solve the optimization problem in Eq. (\ref{eq:max_coh}).
By observing the objective function in Eq. (\ref{eq:max_coh}), we notice that the maximum can be taken at the point $(G_{00},G_{11})$ where both $G_{00}$ and $G_{11}$ are maximal in the feasible region. 

Here we first derive the feasible region of optimization. The transition conditions $G\boldsymbol{\gamma}=\boldsymbol{\gamma}, G\boldsymbol{p_0}=\boldsymbol{p}$ give us following equations
\begin{eqnarray}
        G_{01}&=&2p_0-G_{00},\label{eq:A1}\\
        G_{10}&=&2p_1-G_{11},\\
        G_{02}&=&[1-G_{00}-q(2p_0-G_{00})]/q^2,\\
        G_{12}&=&[1-qG_{11}-(2p_1-G_{11})]/q^2,\label{eq:A4}
\end{eqnarray}
where $(p_0,p_1)$ is a fixed point in the hexogen in Fig. \ref{fig1}.
By applying the condition that above entries lie in $[0,1]$, we give the bounds of $G_{00}$ and $G_{11}$ as
\begin{eqnarray}
        G_{00}^{\rm lb}\leq G_{00}\leq G^{\rm ub}_{00}&:=&\min\left\{2p_0,1,\frac{1-2qp_0}{1-q}\right\},\label{g00upper}\\
        G_{11}^{\rm lb}\leq G_{11}\leq G^{\rm ub}_{11}&:=&\min\left\{2p_1,1,\frac{q^2+2p_1-1}{1-q}\right\}.\label{g11upper}
\end{eqnarray}
Here we omit the expressions for the lower bounds $G_{00}^{\rm lb}$ and $G_{11}^{\rm lb}$, because the central question here is to find the upper bound for $\sqrt{G_{00}G_{11}}$.
Further, by applying the condition $G^{\mathrm{T}}\mathbb{I}=\mathbb{I}$,
we get that $G_{00}+G_{10}$, $G_{01}+G_{11}$ and $G_{02}+G_{12}$ also lie in $[0,1]$. 
It follows that
\begin{equation}
    \mathrm{lb}\leq G_{00}-G_{11}\leq\mathrm{ub},\label{eq:g00_g11}
\end{equation}
where 
\begin{eqnarray}
    \mathrm{lb}&:=&\max\left\{-2p_1,2p_0-1,\frac{1-2(qp_0+p_1)}{1-q}+q\right\},\\
    \mathrm{ub}&:=&\min\left\{1-2p_1,2p_0,\frac{1+q-2(qp_0+p_1)}{1-q}\right\}.
\end{eqnarray}
The combination of Eqs. (\ref{g00upper}), (\ref{g11upper}) and (\ref{eq:g00_g11}) gives the necessary and sufficient condition for entries $G_{00}$ and $G_{11}$ in a feasible transition matrix $G$. The reason for sufficiency is as follows. If one starts from a given pair of $G_{00}$ and $G_{11}$ which satisfies these three equations, other entries of $G$ are fixed by Eqs. (\ref{eq:A1}-\ref{eq:A4}) and $G^{\mathrm{T}}\mathbb{I}=\mathbb{I}$. Further, the transition matrix $G$ as such satisfies all of the four conditions in Eq. (\ref{eq:max_coh}).

Next, we calculate the maximum in Eq. (\ref{eq:max_coh}) for the following three cases
\begin{itemize}
    \item {\rm Case 1.} $\mathrm{lb}\leq G^{\rm ub}_{00}-G^{\rm ub}_{11}\leq\mathrm{ub}$,
    \item {\rm Case 2.} $\mathrm{lb}> G^{\rm ub}_{00}-G^{\rm ub}_{11}$,
    \item {\rm Case 3.} $G^{\rm ub}_{00}-G^{\rm ub}_{11}>\mathrm{ub}$.
\end{itemize}
For Case 1, it is easy to see that $G^{\star}_{00}=G^{\rm ub}_{00}$ and $G^{\star}_{11}=G^{\rm ub}_{11}$. For Case 2, the upper bound $G^{\rm ub}_{11}$ cannot be reached by $G^{\star}_{11}$ because of Eq. (\ref{eq:g00_g11}), so we have $G^{\star}_{00}=G^{\rm ub}_{00}$ and $G^{\star}_{11}=G^{\rm ub}_{00}-\mathrm{lb}$. Similarly, for Case 3, we have $G^{\star}_{00}=G^{\rm ub}_{11}+\mathrm{ub}$ and $G^{\star}_{11}=G^{\rm ub}_{11}$. To sum up, we arrive at the following analytic solution to Eq. (\ref{eq:max_coh})
\begin{equation}
    \frac{1}{2}\left\{\begin{aligned}
    &\sqrt{G^{\rm ub}_{00}G^{\rm ub}_{11}},&\mathrm{lb}\leq &G^{\rm ub}_{00}-G^{\rm ub}_{11}\leq\mathrm{ub};\\
    &\sqrt{G^{\rm ub}_{00}(G^{\rm ub}_{00}-\mathrm{lb})},&\mathrm{lb}> &G^{\rm ub}_{00}-G^{\rm ub}_{11};\\
    &\sqrt{(G^{\rm ub}_{11}+\mathrm{ub})G^{\rm ub}_{11}},&&G^{\rm ub}_{00}-G^{\rm ub}_{11}>\mathrm{ub}.
    \end{aligned}\right.
\end{equation}
This solution is visualized in Fig. \ref{fig1}.

\section{Optimal joint unitary}\label{app:svd}
In this section, we will show that, among all of the joint unitary operators which achieve a given transition matrix, the maximum value of $|\rho_{10}|$ is reached by the unitary operators whose main blocks are diagonal with non-negative entries in a non-increasing order, give. Precisely, we will prove the following lemma.

\begin{lemma}\label{le:optimalU}
  Let $H_S=\sum_{m=0}^2m\hbar\omega\ket{m}\bra{m}$ and $H_R=\sum_n n\hbar\omega\Pi_n$ be the Hamiltonian of the main system $S$ and that of the heat bath $R$ respectively, and $\ket{\psi_0}=\frac{1}{\sqrt{2}}(\ket{0}+\ket{1})$ be the initial state of $S$. For any (energy-preserving) joint unitary $U$, one can always find another (energy-preserving) joint unitary $\tilde U$, which satisfies the following:\\
(a) $\tilde{u}_{jj}^k=M_{jj}^k$, where $M_{jj}^k$ are diagonal matrices with non-negative entries in a non-increasing order;\\
(b) $(\vec{\mathcal{U}_{ij}},\vec{\mathcal{U}_{ij}})=(\vec{\widetilde{\mathcal{U}}_{ij}},\vec{\widetilde{\mathcal{U}}_{ij}})$, which means that $U$ and $\widetilde{U}$ lead to the same transition matrix;\\
(c) $|(\vec{\mathcal{U}_{00}},\vec{\mathcal{U}_{11}})|\leq |(\vec{\widetilde{\mathcal{U}}_{00}},\vec{\widetilde{\mathcal{U}}_{11}})|$, which means that the output coherence $|\rho_{10}|$ achieved by $\widetilde{U}$ is no less than that achieved by $U$.  
\end{lemma}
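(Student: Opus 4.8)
The plan is to identify the optimal configuration through von Neumann's trace inequality and then to show that this configuration is genuinely realised by an energy-preserving joint unitary. For orientation, observe that $U$ may be replaced by the gauge transform $U\mapsto(I_S\otimes W)U(I_S\otimes V)$, where $W=\bigoplus_m W_m$ and $V=\bigoplus_m V_m$ are unitaries commuting with $H_R$; this acts on the blocks as $u_{ij}^k\mapsto W_{k-i}u_{ij}^kV_{k-j}$, leaves $\gamma_R$ invariant (as $V$ commutes with it) and hence leaves the channel $\mathcal{E}_0^{\rm TO}$ invariant, so it changes none of the quantities in (b) and (c). In particular, in the overlap $(\vec{\mathcal{U}_{00}},\vec{\mathcal{U}_{11}})$ the rotations $W_{k-1},V_{k-1}$ cancel, because Eq.~(\ref{eq:inner_pro1}) pairs $u_{00}^{k-1}$ with $u_{11}^k$ with a matched index shift; that overlap therefore depends on $U$ only through the singular values of the main blocks and their relative alignment, which already points to the canonical form claimed in (a).

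I would then establish the bound. Starting from $|(\vec{\mathcal{U}_{00}},\vec{\mathcal{U}_{11}})|\le\sum_k\gamma_{k-1}\,|\mathrm{Tr}(u_{00}^{(k-1)\dagger}u_{11}^k)|$ and applying von Neumann's trace inequality termwise yields $|(\vec{\mathcal{U}_{00}},\vec{\mathcal{U}_{11}})|\le\sum_k\gamma_{k-1}\sum_s\sigma_s(u_{00}^{k-1})\sigma_s(u_{11}^k)$, with singular values taken in non-increasing order. The right-hand side is precisely $(\vec{\widetilde{\mathcal{U}}_{00}},\vec{\widetilde{\mathcal{U}}_{11}})$ for any joint unitary $\tilde U$ whose main blocks are the diagonal matrices $M_{jj}^k=\mathrm{diag}\big(\sigma_0(u_{jj}^k),\sigma_1(u_{jj}^k),\dots\big)$; and, being real and non-negative, it equals its own modulus, which gives (c). For such a $\tilde U$, (b) holds automatically on the diagonal, since $\mathrm{Tr}\big(M_{jj}^{k\dagger}M_{jj}^k\big)=\sum_s\sigma_s(u_{jj}^k)^2=\mathrm{Tr}(u_{jj}^{k\dagger}u_{jj}^k)$ leaves $G_{jj}$ unchanged, while (a) holds by construction. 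What remains is to supply the off-diagonal blocks of $\tilde U$ so that each graded block $\tilde U^{(k)}$ is unitary and the off-diagonal entries $G_{ij}=\sum_k\gamma_{k-j}\mathrm{Tr}(\tilde u_{ij}^{k\dagger}\tilde u_{ij}^k)$ also reproduce those of $U$.

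This last, existence step is the part I expect to be hard. Column- and row-orthonormality of $\tilde U^{(k)}$ force $\sum_{i\neq j}\tilde u_{ij}^{k\dagger}\tilde u_{ij}^k=I-(M_{jj}^k)^2$ together with the transposed relation; the right-hand sides are positive semidefinite because every $\sigma_s(u_{jj}^k)\in[0,1]$, and their ranks are bounded by the complementary bath dimensions exactly because $u_{jj}^k$ already sat as a sub-block of the unitary $U^{(k)}$, so no enlargement of the bath is required. One then distributes the defect operators $\sqrt{I-(M_{jj}^k)^2}$ among the off-diagonal blocks, tuning the split so as to hit the prescribed $G_{ij}$ (the trace identities needed for consistency follow from $G$ being Gibbs-stochastic), and completes each $\tilde U^{(k)}$ to a unitary by a standard contractive-dilation argument in the spirit of Parrott's lemma. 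The delicate point is that the bath eigenspace $\mathcal{H}_m$ of dimension $d_m$ is shared by $u_{00}^m$, $u_{11}^{m+1}$ and $u_{22}^{m+2}$, so the three main blocks cannot be diagonalised simultaneously by a single gauge transformation; but because the objective couples $u_{00}^{k-1}$ only to $u_{11}^k$ through the trace of two already-diagonal matrices, the choice of SVD basis within each block is immaterial to the value of the coherence, which dissolves the apparent obstruction. Assembling (a), (b) and (c) then completes the argument.
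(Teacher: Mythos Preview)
Your treatment of (c) via the von Neumann trace inequality is exactly what the paper does. Where you diverge is in the construction of $\tilde U$, and there you take a much harder road than necessary, with a genuine gap at the end.

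You restrict your ``gauge transforms'' to the form $(I_S\otimes W)\,U\,(I_S\otimes V)$, correctly observe that such transforms preserve the whole channel, and then correctly observe that they cannot simultaneously diagonalise $u_{00}^m$, $u_{11}^{m+1}$, $u_{22}^{m+2}$ because these share the bath block $\mathcal H_m$. Your proposed remedy is to abandon the transform picture altogether and build $\tilde U$ from scratch: fix the main blocks as $M_{jj}^k$ and then fill in off-diagonals by a Parrott-type completion, tuned so that every off-diagonal $G_{ij}$ is reproduced. You yourself flag this existence step as the hard part, and the sketch you give is not a proof: you have not shown that the per-$k$ unitarity constraints and the global trace constraints $\sum_k\gamma_{k-j}\mathrm{Tr}(\tilde u_{ij}^{k\dagger}\tilde u_{ij}^k)=G_{ij}$ can be satisfied simultaneously.

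The paper avoids all of this with one observation you missed: the conjugating unitaries need not be of the form $I_S\otimes(\cdot)$. Take the SVD $u_{jj}^k=A_{jj}^kM_{jj}^kB_{jj}^{k\dagger}$ and set
\[
A=\bigoplus_{k}\bigoplus_{j}A_{jj}^k,\qquad B=\bigoplus_{k}\bigoplus_{j}B_{jj}^k,\qquad \tilde U=A^\dagger U B.
\]
Here $A$ and $B$ are \emph{controlled} bath unitaries: the rotation applied to the bath eigenspace $\mathcal H_m$ depends on the system label $j$. They still commute with $H_S+H_R$, so $\tilde U$ is energy-preserving and automatically unitary. One gets $\tilde u_{ij}^k=A_{ii}^{k\dagger}u_{ij}^kB_{jj}^k$, hence $\tilde u_{jj}^k=M_{jj}^k$ for (a), and $\mathrm{Tr}(\tilde u_{ij}^{k\dagger}\tilde u_{ij}^k)=\mathrm{Tr}(u_{ij}^{k\dagger}u_{ij}^k)$ by cyclicity, giving (b) for \emph{all} $i,j$ at once. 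The obstruction you worried about disappears because the statement only asks for the same transition matrix, not the same channel; once you allow $A,B$ to depend on $j$, simultaneous diagonalisation of all main blocks is immediate, and no completion argument is needed.
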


The rest of this section will be devoted to the proof of this lemma.
Following the method as in Ref. \cite{PhysRevLett.115.210403}, we perform the singular value decomposition (SVD) to the main blocks 
\begin{equation}
    u_{jj}^k=A_{jj}^k M_{jj}^k B_{jj}^{k\dagger}, \label{eq:svd}
\end{equation}
where $A_{jj}^k$ and $B_{jj}^k$ are unitary matrices, and $M_{jj}^k$ are diagonal matrices with singular values (which are non-negative numbers) in a non-increasing order as entries. By introducing two unitary matrices $A\equiv \bigoplus_{k=0}^\infty[\bigoplus_{j=0}^{\min\{k,2\}}A_{jj}^k]$ and $B\equiv \bigoplus_{k=0}^\infty[\bigoplus_{j=0}^{\min\{k,2\}}B_{jj}^k]$, we define a new joint unitary
\begin{equation}
    \widetilde{U}=A^\dagger U B.\label{eq:u_tilde}
\end{equation}
Similar to Eq. (\ref{eq:u}), $\widetilde{U}$ can be expressed as $\widetilde{U}=\bigoplus_{k=0}^\infty\widetilde{U}^{(k)}$ with $\widetilde{U}^{(k)}=\sum_{i,j=0}^{\min\{k,2\}}\ket{i}\bra{j}\otimes \widetilde{u}_{ij}^k$. Then Eq. (\ref{eq:u_tilde}) leads to
\begin{equation}
    \widetilde{u}_{ij}^k=A_{ii}^{k\dagger}u_{ij}^k B_{jj}^k.\label{eq:u_ij_tilde}
\end{equation}
Now we are ready to prove that $\widetilde{U}$ satisfies the conditions (a), (b) and (c) as mentioned above.

By substituting Eq. (\ref{eq:svd}) to Eq. (\ref{eq:u_ij_tilde}) with $i=j$, we have $\widetilde{u}_{jj}^k=A_{jj}^{k\dagger}A_{jj}^k M_{jj}^k B_{jj}^{k\dagger} B_{jj}^k=M_{jj}^k$. Thus, condition (a) holds.

As for condition (b), we employ the definition of inner product as in Eq. (\ref{eq:inner_pro1}), and obtain the following
\begin{eqnarray}
    &&(\vec{\widetilde{\mathcal{U}}_{ij}},\vec{\widetilde{\mathcal{U}}_{ij}})\nonumber\\
    &=&\sum_{k=\max\{i,j\}}^\infty\gamma_{k-j}{\rm Tr}(\widetilde{u}_{ij}^{k\dagger}\widetilde{u}_{ij}^k)\nonumber\\
    &=&\sum_{k=\max\{i,j\}}^\infty\gamma_{k-j}{\rm Tr}( B_{jj}^{k\dagger} u_{ij}^{k\dagger}A_{ii}^kA_{ii}^{k\dagger}u_{ij}^k B_{jj}^k)\nonumber\\
    &=&\sum_{k=\max\{i,j\}}^\infty\gamma_{k-j}{\rm Tr}( u_{ij}^{k\dagger}u_{ij}^k)\nonumber\\
    &=&(\vec{\mathcal{U}_{ij}},\vec{\mathcal{U}_{ij}}).
\end{eqnarray}
Then from Eq. (\ref{eq:pij}), this equality means that $U$ and $\widetilde{U}$ leads to the same transition matrix.

In order to prove condition (c), we employ the following lemma, which was proved in Refs. \cite{von1937some,Fan1951} and introduced to this problem in Ref. \cite{PhysRevLett.115.210403}.
\begin{lemma} If $X$ and $Y$ are $d \times d$ complex matrices, $W$ and $V$ are $d \times d$ unitary  matrices, and $\sigma_{1}\geq\sigma_{2}\geq\cdots\sigma_{d}\geq0$ denotes ordered singular values, then 	$$\left|{\rm Tr}\left(WXVY\right)\right.|\leq\sum_{s=0}^{d-1}\sigma_{s}(X)\sigma_{s}(Y)$$
	and the equality always exist for some $W$ and $V$.
\end{lemma}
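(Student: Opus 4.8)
The plan is to reduce the inequality, via singular value decompositions, to a statement about doubly stochastic matrices, and then to combine the Birkhoff--von Neumann theorem with the rearrangement inequality; the equality case will fall out of the same reduction.

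First I would write $X=A_X\Sigma_X B_X^\dagger$ and $Y=A_Y\Sigma_Y B_Y^\dagger$ with $A_X,B_X,A_Y,B_Y$ unitary and $\Sigma_X,\Sigma_Y$ the diagonal matrices whose entries are the ordered singular values $\sigma_s(X)$ and $\sigma_s(Y)$. Cyclicity of the trace then gives $\mathrm{Tr}(WXVY)=\mathrm{Tr}(P\Sigma_X Q\Sigma_Y)$ with $P\equiv B_Y^\dagger W A_X$ and $Q\equiv B_X^\dagger V A_Y$ both unitary; as $W,V$ sweep over all unitaries, so do $P,Q$, so it suffices to bound $|\mathrm{Tr}(P\Sigma_X Q\Sigma_Y)|$ over unitary $P,Q$. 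Expanding in components, $\mathrm{Tr}(P\Sigma_X Q\Sigma_Y)=\sum_{i,j}\sigma_i(Y)\sigma_j(X)P_{ij}Q_{ji}$, so by the triangle inequality together with $|P_{ij}||Q_{ji}|\le\tfrac12(|P_{ij}|^2+|Q_{ji}|^2)$ the estimate reduces to proving $\sum_{i,j}\sigma_i(Y)\sigma_j(X)D_{ij}\le\sum_s\sigma_s(X)\sigma_s(Y)$ for any doubly stochastic $D$ — which applies to both $D_{ij}=|P_{ij}|^2$ and $D_{ij}=|Q_{ji}|^2$, since every row and column of a unitary is a unit vector.

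Second, I would prove this doubly stochastic bound by writing $D=\sum_k\lambda_k\Pi_k$ as a convex combination of permutation matrices (Birkhoff--von Neumann), and observing that for each permutation $\pi$ one has $\sum_i\sigma_i(Y)\sigma_{\pi(i)}(X)\le\sum_i\sigma_i(X)\sigma_i(Y)$ by the rearrangement inequality, because both sequences of singular values are non-increasing; the convex combination then inherits the bound. Alternatively the step can be done without Birkhoff by a summation-by-parts / induction-on-$d$ argument that replaces $D$ by the identity one coordinate at a time, but the convexity route is the shortest to write down.

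Finally, for the equality claim I would simply exhibit optimizers: taking $W=B_Y A_X^\dagger$ and $V=B_X A_Y^\dagger$ makes $P=Q=I$, so $\mathrm{Tr}(WXVY)=\mathrm{Tr}(\Sigma_X\Sigma_Y)=\sum_s\sigma_s(X)\sigma_s(Y)$, which is real and non-negative and hence equals its own modulus. The only genuine content is the reduction to doubly stochastic matrices and the invocation of Birkhoff--von Neumann plus the rearrangement inequality; everything else is bookkeeping with the SVD and the triangle inequality, so I expect the argument to be short and the "main obstacle" to be merely deciding how self-contained to make the doubly stochastic step.
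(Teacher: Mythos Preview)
Your argument is correct: the SVD reduction to $\mathrm{Tr}(P\Sigma_XQ\Sigma_Y)$, the AM--GM step $|P_{ij}||Q_{ji}|\le\tfrac12(|P_{ij}|^2+|Q_{ji}|^2)$ to pass to doubly stochastic matrices, Birkhoff--von Neumann plus rearrangement, and the explicit optimizers $W=B_YA_X^\dagger$, $V=B_XA_Y^\dagger$ all go through without issue. The paper, however, does not prove this lemma at all; it merely quotes it as a known result, attributing it to von Neumann (1937) and Fan (1951), and then uses it as a black box in Appendix~B to bound $|\mathrm{Tr}(u_{00}^{(k-1)\dagger}u_{11}^{k})|$. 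There is therefore nothing in the paper to compare your method against --- you have supplied a self-contained proof of the von Neumann trace inequality where the authors chose a citation, and your route is the standard modern one.
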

From this lemma, we have $|\mathrm{Tr}[u_{00}^{(k-1)\dagger}u_{11}^{k}]|\leq \sum_{s=0}^{d_{k-1}-1}\sigma_s(u_{00}^{(k-1)\dagger})\sigma_s(u_{11}^{k})=\mathrm{Tr}[M_{00}^{(k-1)\dagger}M_{11}^{k}]$, and hence,
\begin{eqnarray}
    |(\vec{\mathcal{U}_{00}},\vec{\mathcal{U}_{11}})| & \leq & \sum_{k=1}^\infty \gamma_{k-1}|\mathrm{Tr}[u_{00}^{k\dagger}u_{11}^{k-1}]|\nonumber\\
    & \leq & \sum_{k=1}^\infty \gamma_{k-1}\mathrm{Tr}[M_{00}^{(k-1)\dagger}M_{11}^{k}]\nonumber\\
    & = & |(\vec{\widetilde{\mathcal{U}}_{00}},\vec{\widetilde{\mathcal{U}}_{11}})|.
\end{eqnarray}
This completes the proof of Lemma \ref{le:optimalU}.

\section{Proof of Eq. (\ref{eq:sigmas})}\label{app:tilde_uk}
In this section, we will prove Eq. (\ref{eq:sigmas}) in the main text.
For $k\geq 2$, we have 
\begin{equation}
U^{(k)}=
\left[
\begin{array}{c|c|c}
	M_{00}^{k}& \textbf{0} &  u_{02}^{k} \\ \hline 
	\textbf{0}& M_{11}^{k} &  \textbf{0} \\ \hline
	 u_{20}^{k}& \textbf{0} & \textbf{0}
\end{array}
\right],\label{eq:tilde_uk}
\end{equation}
where $M_{11}^k=\iden_{d_{k-1}}$ and $M_{00}^k$ is a diagonal matrix with non-negative entries in a non-increasing order. We will first prove that the number of zero rows in ${u}_{02}^k$ equals exactly to $d_k-d_{k-2}$, and then show that there are $d_k-d_{k-2}$ entries of $M_{00}^k$ which equal to one, while the rest entries of $M_{00}^k$ all equal to zero.

Here and following, we denote the $s$th row of a matrix $X$ as $\vec{r}_s(X)$, and the $l$th column of $X$ as $\vec{c}_l(X)$. From the unitarity of $U^{(k)}$, we have $\vec{r}_s( U^{(k)})^*\cdot\vec{r}_{s'}( U^{(k)})=\delta_{ss'}$ and $\vec{c}_l( U^{(k)})^*\cdot\vec{c}_{l'}( U^{(k)})=\delta_{ll'}$. Our deviations are based on these two equations.

For $d_k+d_{k-1}\leq l,l'<d_k+d_{k-1}+d_{k-2}$, we have $\delta_{ll'}=\vec{c}_l(U^{(k)})^*\cdot\vec{c}_{l'}(U^{(k)})=\vec{c}_l(u_{02}^k)^*\cdot\vec{c}_{l'}(u_{02}^k)$. It means that the $d_{k-2}$ columns of $u_{02}^k$ are nontrivial and linearly independent. Therefore, at least $d_{k-2}$ rows of $u_{02}^k$ are nontrivial.

For $0\leq s,s'\leq d_k-1$ and $s\neq s'$, we have $0=\vec{r}_s(U^{(k)})^*\cdot\vec{r}_{s'}( U^{(k)})=\vec{r}_s(u_{02}^k)^*\cdot\vec{r}_{s'}( u_{02}^k)$. Hence, there are at most $d_{k-2}$ nontrivial rows in $ u_{02}^k$, and the rest rows of $u_{02}^k$ have to be zero.

Therefore, the number of zero rows in ${u}_{02}^k$ is $d_k-d_{k-2}$.
Then we have
\begin{equation}
    \vec{r}_s(u_{02}^k)\bigg\{\begin{array}{cc}
      =0,   &  0\leq s\leq d_k-d_{k-2}-1,\\
      \neq 0,   & d_k-d_{k-2}\leq s\leq d_k-1.
    \end{array}
\end{equation}
It follows that for $0\leq s\leq d_k-d_{k-2}-1$, we have $1=\vec{r}_s(U^{(k)})^*\cdot\vec{r}_{s}(U^{(k)})=[\sigma_s(M_{00}^k)]^2$, and hence $\sigma_s(M_{00}^k)=1$.

For $d_k-d_{k-2}\leq s\leq d_k-1$ and $d_k+d_{k+1}\leq l\leq d_k+d_{k+1}+d_{k+2}$, the scalar product $\vec{c}_s( U^{(k)})^*\cdot\vec{c}_{l}( U^{(k)})=0$ gives $\sigma_s(M_{00}^k)\cdot \vec{r}_s(u_{02}^k)=0$. Meanwhile, $\vec{r}_s( u_{02}^k)\neq0$, so we have $\sigma_s(M_{00}^k)=0$.

\section{Detailed Proof of Eq. (\ref{eq:gap_epsilon})}\label{app:gap}
In this section, we prove Eq. (\ref{eq:gap_epsilon}) under the condition $1-q-q^2\gg0$ (or equivalently, $q\ll \frac{\sqrt5-1}{2}$) and the assumptions (i)-(iii) on the heat bath.

Here we start from Eq. (\ref{eq:p_mu}) in the main text.
It follows that $\sqrt{p_{0|0}^\epsilon p_{1|1}^\epsilon}\geq(\vec{\mathcal{U}^\epsilon_{00}},\vec{\mathcal{U}^\epsilon_{11}})$ and hence,
\begin{equation}
    \frac12\sqrt{p_{0|0}^\epsilon p_{1|1}^\epsilon}-\frac12(\vec{\mathcal{U}^\epsilon_{00}},\vec{\mathcal{U}^\epsilon_{11}})\geq\frac14 \sqrt{\frac{p_{1|1}^\epsilon}{p_{0|0}^\epsilon}}(\vec{\mu},\vec{\mu}),\label{eq:p_mu1}
\end{equation}
Because $\sqrt{p_{0|0}^\epsilon p_{1|1}^\epsilon}\leq\sqrt{1-q^2+\epsilon}=\sqrt{1-q^2}+\frac{\epsilon}{2\sqrt{1-q^2}}+O(\epsilon^2)$, and $\sqrt{\frac{p_{1|1}^\epsilon}{p_{0|0}^\epsilon}}\geq\sqrt{\frac{1-\epsilon}{1-q^2+\epsilon}}=\frac{1}{\sqrt{1-q^2}}-\frac{2-q^2}{2(1-q^2)^{3/2}}\epsilon+O(\epsilon^2)$, the above equation becomes
 
\begin{eqnarray}
    \Delta_{10}^\epsilon&\geq&\frac14\left(1-\frac{2-q^2}{2(1-q^2)}\epsilon\right) \frac{(\vec{\mu},\vec{\mu})}{\sqrt{1-q^2}}\nonumber\\
    &&-\frac{\epsilon}{4\sqrt{1-q^2}}+O(\epsilon^2), \label{eq:delta_epsilon1}
\end{eqnarray}
From the definition as in Eq. (\ref{eq:alpha_mu}) and the fact that $p_{1|1}^\epsilon\in[1-\epsilon,1]$, we obtain the following
\begin{equation}
    (\vec\mu,\vec\mu)>\sum_{k=0}^\infty\gamma_k\sum_{s=0}^{d_k-1}[\sigma_s(M_{00,\epsilon}^k)-\alpha]^2-2\alpha\epsilon.\label{eq:mumu}
\end{equation}
The reason is as follows. Firstly, because $1-p_{1|1}^\epsilon\leq\epsilon$, $\sum_{k=0}^\infty\gamma_k\sum_{s=0}^{d_k-1}1=1$ and $p_{1|1}^\epsilon=\sum_{k=0}^\infty\gamma_k\sum_{s=0}^{d_k-1}[\sigma_s(M_{11,\epsilon}^k)]^2$, we have the following
\begin{equation}
    \sum_{k=0}^\infty\gamma_k\sum_{s=0}^{d_k-1}[1-\sigma_s(M_{11,\epsilon}^k)]\leq \sum_{k=0}^\infty\gamma_k\sum_{s=0}^{d_k-1}[1-\sigma_s(M_{11,\epsilon}^k)^2]\leq\epsilon.
\end{equation}
By the definition as in Eq. (\ref{eq:alpha_mu}),
\begin{eqnarray}
     (\vec\mu,\vec\mu)&=&\sum_{k=0}^\infty\gamma_k\sum_{s=0}^{d_k-1}[\sigma_s(M_{00,\epsilon}^k)-\alpha\sigma_s(M_{11,\epsilon}^k)]^2\nonumber\\
     &\geq&\sum_{k=0}^\infty\gamma_k\sum_{s=0}^{d_k-1}\{[\sigma_s(M_{00,\epsilon}^k)-\alpha]^2-2\alpha[1-\sigma_s(M_{11,\epsilon}^k)]\}\nonumber\\
     &\geq&\sum_{k=0}^\infty\gamma_k\sum_{s=0}^{d_k-1}[\sigma_s(M_{00,\epsilon}^k)-\alpha]^2-2\alpha\epsilon.
\end{eqnarray}
Here for the second line, we have used $\sigma_s(M_{11,\epsilon}^k)\in[0,1]$ and $\sigma_s(M_{00,\epsilon}^k)-\alpha\geq-1$.

Now we consider the $k$th block of $U^\epsilon$
\begin{equation}
    U_\epsilon^{(k)}=\left[\begin{array}{c|c|c}
        M_{00,\epsilon}^k & u_{01,\epsilon}^k & u_{02,\epsilon}^k \\ \hline
        u_{10,\epsilon}^k & M_{11,\epsilon}^k & u_{12,\epsilon}^k \\ \hline
        u_{20,\epsilon}^k & u_{21,\epsilon}^k & u_{22,\epsilon}^k
    \end{array}
    \right],
\end{equation}
where $k\hbar\omega\in\mathcal R$. By assumption (iii) and the condition $1-q-q^2\gg0$, we have $d_k>d_{k-1}+d_{k-2}$.
For $0\leq s,s'\leq d_{k}-1$ and $s\neq s'$, the unitarity of $U^{(k)}_\epsilon$ implies 
\begin{eqnarray}
    0&=&\vec r_s(U_\epsilon^{(k)})^*\cdot \vec r_{s'}(U_\epsilon^{(k)})\nonumber\\
    &=& \vec r_s([u_{01,\epsilon}^k | u_{02,\epsilon}^k])^*\cdot \vec r_{s'}([u_{01,\epsilon}^k | u_{02,\epsilon}^k]).\label{eq:inner_pro_epsilon}
\end{eqnarray}
Here $[u_{01,\epsilon}^k | u_{02,\epsilon}^k]$ is a rectangle matrix of dimension $d_k\times (d_{k-1}+d_{k-2})$, which is composed of two blocks $u_{01,\epsilon}^k$ and $u_{02,\epsilon}^k$. Hence there are at most $(d_{k-1}+d_{k-2})$ nontrivial rows of $[u_{01,\epsilon}^k | u_{02,\epsilon}^k]$ satisfying Eq. (\ref{eq:inner_pro_epsilon}), and the rest $d_k-(d_{k-1}+d_{k-2})$ rows are zero. For $\vec r_s([u_{01,\epsilon}^k | u_{02,\epsilon}^k])=0$, the equality $\vec r_s(U_\epsilon^{(k)})^*\cdot \vec r_{s}(U_\epsilon^{(k)})=1$ leads to $\sigma_s(M_{00,\epsilon}^k)=1$. Therefore, at least $d_k-(d_{k-1}+d_{k-2})$ singular values of $M_{00,\epsilon}^k$ are equal to 1. By subtracting some non-negative terms from the right hand side of Eq. (\ref{eq:mumu}), we obtain the following
\begin{eqnarray}
     (\vec\mu,\vec\mu)& > & \sum_{k:k\hbar\omega\in\mathcal R}\gamma_k\sum_{s:\sigma_s(M_{00,\epsilon}^k)=1}[\sigma_s(M_{00,\epsilon}^k)-\alpha]^2-2\alpha\epsilon\nonumber\\
     &=&\sum_{k:k\hbar\omega\in\mathcal R}\gamma_k(d_k-d_{k-1}-d_{k-2})(1-\alpha)^2-2\alpha\epsilon\nonumber\\
     &\geq&(1-\alpha)^2(1-q-q^2)(1-\delta)-2\alpha\epsilon.\label{eq:mumu1_c}
\end{eqnarray}
For the last line, we have used $\gamma_k=\gamma_{k-m}q^m$ with $m=0,1,2$, and hence, $\sum_{k:k\hbar\omega\in\mathcal R}\gamma_kd_{k-m}=q^m\sum_{k:k\hbar\omega\in\mathcal R}\gamma_{k-m}d_{k-m}=q^m(1-\delta)$.
By definition, $\alpha\leq\sqrt{p_{0|0}^\epsilon/p_{1|1}^\epsilon}\leq \sqrt{\frac{1-q^2+\epsilon}{1-\epsilon}}\equiv\alpha(\epsilon)$, and hence $\alpha\leq\sqrt{1-q^2}+O(\epsilon)$, and
\begin{widetext}
\begin{eqnarray}
     (1-\alpha)^2 &\geq& [1-\alpha(\epsilon)]^2= [1-\alpha(\epsilon)]^2|_{\epsilon=0}-2[1-\alpha(\epsilon)]\frac{d\alpha(\epsilon)}{d\epsilon}\bigg|_{\epsilon=0}\epsilon+O(\epsilon^2)\nonumber\\
     &=& [1-\sqrt{1-q^2}]^2-\frac{(1-\sqrt{1-q^2})(2-q^2)}{\sqrt{1-q^2}}\epsilon+O(\epsilon^2).\label{eq:alpha1}
\end{eqnarray}
\end{widetext}
Then Eq. (\ref{eq:mumu1_c}) becomes
\begin{equation}
         (\vec\mu,\vec\mu)>
     [1-\sqrt{1-q^2}]^2(1-q-q^2)(1-\delta)-f_1(q)\epsilon+O(\epsilon^2).\label{eq:mumu2}
\end{equation}
where $f_1(q)=2\sqrt{1-q^2}+\frac{(1-\sqrt{1-q^2})(2-q^2)(1-q-q^2)}{\sqrt{1-q^2}}$. Together with Eq. (\ref{eq:delta_epsilon1}), we obtain
\begin{eqnarray}
    \Delta_{10}^\epsilon&>&\frac{[1-\sqrt{1-q^2}]^2(1-q-q^2)(1-\delta)}{4\sqrt{1-q^2}}\nonumber\\
    &&-\frac{1}{4}f(q)\epsilon+O(\epsilon^2), \label{eq:delta_epsilon2}
\end{eqnarray}
where $f(q)=2+\frac{[1-(1-q^2)^2](1-q-q^2)}{2(1-q^2)^{3/2}}+\frac{1}{\sqrt{1-q^2}}$. For $0< q<\frac{\sqrt5-1}{2}$, it holds that $1<\frac{1}{\sqrt{1-q^2}}<2$ and $f(q)<8$. Then we obtain the simpler form as in Eq. (\ref{eq:gap_epsilon}).

\nocite{*}

\bibliography{gap}

\end{document}